\documentclass[letterpaper, 10 pt, conference]{ieeeconf}  
\IEEEoverridecommandlockouts
\usepackage[normalem]{ulem}
\usepackage{amsmath}
\usepackage{amsfonts}
\usepackage{graphicx}
\usepackage{xcolor}
\usepackage{figsize}
\usepackage{upgreek}
\usepackage{amssymb}
\usepackage{algorithmic}
\usepackage[linesnumbered,ruled,vlined]{algorithm2e}
\usepackage{bm}
\usepackage{siunitx}
\usepackage{mathtools}
\usepackage[caption=false,font=footnotesize]{subfig}
\usepackage{booktabs}
\usepackage{adjustbox}  

\usepackage{amsthm}

\newtheorem{definition}{\bf Definition}
\newtheorem{theorem}{\bf Theorem}
\newtheorem{lemma}{\bf Lemma}

\definecolor{darkpink}{RGB}{220, 50, 120}

\graphicspath{{fig/}}

\title{\LARGE \bf
Interaction-Aware Predictive Environmental Control Barrier Function for Emergency Lane Change}

\author{Ying Shuai Quan$^{1}$, Paolo Falcone$^{1}$, Jonas Sj$\ddot{\text{o}}$berg$^{1}$
\thanks{This work is funded by the Fordonsstrategisk
Forskning och Innovation program of Vinnova under the grant 2018-05005.
		}
\thanks{$^{1}$ Ying Shuai Quan, Paolo Falcone and Jonas Sj$\ddot{\text{o}}$berg are with the Mechatronics group at the Department of Electrical Engineering, Chalmers University of Technology, Gothenburg, Sweden {\tt\footnotesize \{quany,paolo.falcone,jonas.sjoberg\}@chalmers.se }}
}

\begin{document}

\maketitle
\thispagestyle{empty}
\pagestyle{empty}

\begin{abstract}
Safety-critical motion planning in mixed traffic remains challenging for autonomous vehicles, especially when it involves interactions between the ego vehicle (EV) and surrounding vehicles (SVs). In dense traffic, the feasibility of a lane change depends strongly on how SVs respond to the EV motion. This paper presents an interaction-aware safety framework that incorporates such interactions into a control barrier function (CBF)-based safety assessment. The proposed method predicts near-future vehicle positions over a finite horizon, thereby capturing reactive SV behavior and embedding it into the CBF-based safety constraint. To address uncertainty in the SV response model, a robust extension is developed by treating the model mismatch as a bounded disturbance and incorporating an online uncertainty estimate into the barrier condition. Compared with classical environmental CBF methods that neglect SV reactions, the proposed approach provides a less conservative and more informative safety representation for interactive traffic scenarios, while improving robustness to uncertainty in the modeled SV behavior.
\end{abstract}

\section{Introduction}

Autonomous driving in mixed traffic requires safety-critical decision
making in environments where surrounding vehicles (SVs) react to the
ego vehicle (EV). This coupling is particularly important in emergency
lane changes: when a suddenly appearing road user (RU) blocks the
current lane, the feasibility of the maneuver depends not only on the
current gap in the target lane, but also on whether the SV yields to the
EV. Therefore, safety assessment based only on instantaneous geometry or
non-reactive environment models can be overly conservative or even
misleading. This is illustrated in Fig.~\ref{fig:scenario}, where accounting for the SV reaction opens a feasible gap for lane change, whereas a nominal non-reactive prediction deems the maneuver infeasible.

Control barrier functions (CBFs) have been widely used in autonomous
driving for safety-critical control, since they provide
forward-invariance guarantees for safe sets while remaining compatible
with online optimization-based control
synthesis~\cite{ames2016control,jankovic2018robust}. Environmental CBFs
(ECBFs) extend this framework to dynamic environments by reasoning over
joint EV–environment dynamics~\cite{molnar2022safety}. However, many
existing formulations, including ECBFs, still model surrounding traffic
as exogenous or non-interactive, which is insufficient for  interactive mixed-traffic scenarios.

To capture such interaction in mixed traffic, the SV response must be
modeled rather than treated as exogenous. Car-following models provide a
natural description of SV longitudinal behavior and can be extended to
account for anticipatory reactions to the EV~\cite{brito2022learning}. In practice, however, the
true SV behavior is uncertain and may deviate from the nominal model.
Such mismatch affects both the magnitude of the SV response and the
timing at which the interaction becomes active. A practical
interaction-aware safety framework should therefore incorporate both an
explicit SV response model and robustness to model uncertainty, while
remaining computationally tractable online.

\begin{figure}[t]
  \centering
  \includegraphics[width=\columnwidth]{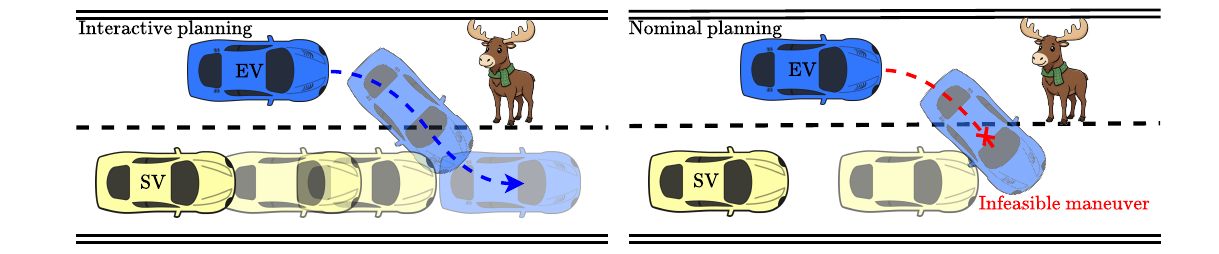}
  \caption{Interactive versus nominal planning in an obstacle-avoidance lane-change scenario. Faded yellow and blue vehicles indicate predicted future positions of the SV and EV. The blue dotted curve shows interactive planning, where the SV is expected to decelerate and open a feasible gap for lane change. The red dotted curve shows nominal planning, where the SV is assumed to maintain constant velocity, making the maneuver infeasible.}
  \label{fig:scenario}
\end{figure}

Motivated by this, we propose an
\emph{interaction-aware predictive environmental control barrier
function} (IECBF) for emergency lane change. The method incorporates a
joint EV--SV model into the ECBF framework, with P-IDM used to capture
anticipatory SV interaction, and constructs a predictive barrier
inspired by~\cite{breeden2022predictive} that
evaluates the most critical EV--SV configuration over a finite horizon.
To address model uncertainty, we further develop a robust extension that
combines observer-based correction for acceleration mismatch with a
multi-model rollout for gateway uncertainty.
The main contributions of this paper are: 1) We formulate an interaction-aware joint EV--SV model within an
ECBF framework and construct a predictive IECBF that evaluates the
most critical EV--SV configuration over a finite horizon.
2) We develop a robust extension under SV-model uncertainty by
combining observer-based correction for acceleration mismatch with a
multi-model rollout for gateway uncertainty.
3) We demonstrate in simulation that compared with a baseline
ECBF using a non-reactive SV model, the proposed method improves
feasibility and reduces conservatism in interactive emergency
lane-change scenarios.


\section{Preliminaries}
\subsection{Scenario Description}
\label{sec:sim_setup}
The road is modeled as a straight two-lane segment. We consider an
emergency lane-change scenario in which the EV must leave its initial
lane due to a suddenly appearing RU and merge into a target lane
occupied by a single SV. Throughout the maneuver, the SV is assumed to
remain in its lane and react only through longitudinal braking or
acceleration.

The objective is to design the EV control input to guarantee collision
avoidance while accounting for the SV’s reactive behavior through the
proposed IECBF framework. CBF constraints enforce safety with respect
to both the SV and the suddenly appearing RU, modeled as a temporary
static obstacle during lane change, while CLF-based constraints encode
the EV’s nominal tracking objective.

\subsection{EV Dynamics}
\label{subsec:ev_dynamics}
We consider continuous-time dynamics. Let $x(t)\in\mathbb{R}^{n_x}$ and $u(t)\in\mathbb{R}^{n_u}$ denote the EV state and control input as: 
\begin{equation}
    x :=
    \begin{bmatrix}
        X_e & Y_e & \psi_e & v_e
    \end{bmatrix}^\top,
    u:=\begin{bmatrix}a_e& \delta_e\end{bmatrix}^\top, 
\end{equation}
where $(X_e,Y_e)$ denotes the position of the EV center of gravity (CG) in the inertial frame,
$\psi_e$ is the heading angle, $v_e$ is the longitudinal speed, $a_e$ is the EV longitudinal acceleration and $\delta_e$ is the steering angle.
The admissible input set is
$\mathcal U \coloneqq \{u=[a_e,\delta_e]^\top\in\mathbb R^2:
\delta_e\in[\delta_{e,\min},\delta_{e,\max}]\}$.
Under a small-angle assumption,
the EV dynamics is written as
\begin{equation}
    \dot x = f(x) + g(x)u,
    \label{eq:ev_dyn}
\end{equation}
with 
\begin{equation*}
    f(x) =
    \begin{bmatrix}
        v_e\cos\psi_e \\
        v_e\sin\psi_e \\
        0 \\
        0
    \end{bmatrix},
    g(x) =
    \begin{bmatrix}
        0 & -v_e\sin\psi_e \\
        0 & \;\;v_e\cos\psi_e \\
        0 & \dfrac{v_e}{\ell_r} \\
        1 & 0
    \end{bmatrix},
\end{equation*}
where $\ell_r$ is the distance from the vehicle CG to the rear axle.

\subsection{CLF Design for Lane Change}
\label{sec:clf_lane_change}

We employ standard control Lyapunov function (CLF) constraints to encode
the nominal lane-change objective~\cite{ames2016control}.
Let $Y_{\rm target}$ denote the center line of the target lane. During the
lane-change maneuver, it serves as the reference for the EV lateral position
and heading. We define
\begin{align}\label{eq:CLFs}
    V_y(x) &:= \bigl(Y_e - Y_{\rm target}\bigr)^2, \quad
    V_\psi(x) := \psi_e^2,
\end{align}
where $V_y$ penalizes lateral deviation from the target-lane center, while
$V_\psi$ penalizes heading deviation from the road direction. The corresponding
CLF constraints are imposed as soft constraints in a quadratic programming (QP) problem:
\begin{align}
    L_f V_y(x) + L_g V_y(x)\,u
    &\le
    -\alpha_y\bigl(V_y(x)\bigr) + \delta_y,
    \label{eq:clf_y_constraint}
    \\
    L_f V_\psi(x) + L_g V_\psi(x)\,u
    &\le
    -\alpha_\psi\bigl(V_\psi(x)\bigr) + \delta_\psi,
    \label{eq:clf_psi_constraint}
\end{align}
where $\alpha_{y,\psi}$ are class-$\mathcal{K}$ functions and
$\delta_{y,\psi}\geq 0$ are slack variables. These soft
constraints promote lateral tracking and heading regulation whenever
compatible with safety.

\subsection{Baseline ECBF Formulation}
\label{sec:baseline_ecbf}

This subsection presents a baseline ECBF design for the lane-change scenario using an EV–SV joint-state representation equivalent to the augmented-state form in \cite{molnar2022safety}, with the SV modeled as non-reactive at constant velocity.

\subsubsection{Baseline joint EV--SV model}
Let the SV state be
$x_s \coloneqq
    \begin{bmatrix}
        X_s & Y_s & v_s
    \end{bmatrix}^\top$,
where $(X_s,Y_s)$ denotes the position of the SV CG in the inertial frame, and $v_s$ denotes its longitudinal speed. The joint EV--SV
state is then defined as
\begin{equation}\label{eq:zstate}
    z \coloneqq
    \begin{bmatrix}
        x^\top & x_s^\top
    \end{bmatrix}^\top.
\end{equation}
With SV's nominal dynamics
given by
$\dot x_s = \bar f_s(x_s)$,  
    $\bar f_s(x_s) \coloneqq
    \begin{bmatrix}
        v_s \\
        \mathbf{0}_{2\times 1}
    \end{bmatrix}$
and combining with \eqref{eq:ev_dyn}, the
nominal joint dynamics are
\begin{equation}\label{eq:joint_sys_baseline}
    \dot z = \bar F(z) + \bar G(z)u, \bar F(z) \coloneqq
    \begin{bmatrix}
        f(x) \\
        \bar f_s(x_s)
    \end{bmatrix},
    \bar G(z) \coloneqq
    \begin{bmatrix}
        g(x) \\
        \mathbf{0}_{3\times 2}
    \end{bmatrix}.
\end{equation}

\subsubsection{ECBF Design}
\label{subsec:ecbf_baseline}

For the baseline joint system~\eqref{eq:joint_sys_baseline}, define the
safe set
    $\mathcal{C}_z
    :=
    \left\{
        z \in \mathcal{Z} : H(z) \ge 0
    \right\},$
where $H:\mathcal{Z}\to\mathbb{R}$ is continuously differentiable.

\begin{definition}[ECBF]
\label{def:ecbf_baseline}
A continuously differentiable function $H:\mathcal Z\to\mathbb R$ is
called an ECBF for \eqref{eq:joint_sys_baseline} if there exists a
class-$\mathcal{K}$ function $\alpha_H$ such that, for all
$z \in \mathcal{C}_z$,
\begin{equation}
    \sup_{u\in\mathcal U}
    \bigl(
        L_{F} H(z) + L_G H(z)\,u
    \bigr)
    \ge
    -\alpha_H\bigl(H(z)\bigr).
    \label{eq:ecbf_def_cond_baseline}
\end{equation}
\end{definition}

If a locally Lipschitz controller $u:\mathcal C_z\to\mathcal U$
satisfies
\begin{equation}
    L_{F} H(z) + L_G H(z)\,u(z)
    \ge
    -\alpha_H\bigl(H(z)\bigr),
    \quad
    \forall z \in \mathcal{C}_z,
    \label{eq:ecbf_forward_invariance_baseline}
\end{equation}
then $\mathcal C_z$ is forward invariant~\cite{molnar2022safety}.

To enforce collision avoidance between the EV and the SV, we choose
$H_{\mathrm{SV}}(z)$ based on the relative EV--SV position expressed in
the EV body frame, let
$\begin{bmatrix}
        \Delta X \\ \Delta Y
    \end{bmatrix}
    =
    R(\psi_e)^\top
    \begin{bmatrix}
        X_s - X_e \\ Y_s - Y_e
    \end{bmatrix}$
with $R(\psi_e)
    =
    \begin{bmatrix}
        \cos\psi_e & -\sin\psi_e \\
        \sin\psi_e & \cos\psi_e
    \end{bmatrix}$.
Here, $\Delta X$ and $\Delta Y$ denote the longitudinal and lateral
gaps in the EV body frame, respectively.
We then define the ECBF candidate as
\begin{equation}
    H_{\mathrm{SV}}(z)
    =
    a_\mathrm{SV} \Delta X^2
    +
    b_s \Delta Y^2
    - 1,
    \label{eq:H_baseline}
\end{equation}
where $a_\mathrm{SV} = 1/(r^\mathrm{SV}_a)^2$ and $b_\mathrm{SV} = 1/(r^\mathrm{SV}_b)^2$, with $r^\mathrm{SV}_a$ and $r^\mathrm{SV}_b$
denoting the longitudinal and lateral semi-axes of the ellipsoidal
safety envelope. 
Then with $\alpha_{\mathrm{SV}}$ as a class-$\mathcal{K}$ function,
the baseline ECBF constraint is
\begin{equation}
    L_{\bar F} H_{\mathrm{SV}}(z) + L_{\bar G} H_{\mathrm{SV}}(z)\,u
    \ge
    -\alpha_{\mathrm{SV}}\!\bigl(H_{\mathrm{SV}}(z)\bigr).
    \label{eq:ecbf_cond_baseline}
\end{equation}
\vspace{-0.5cm}
\subsection{RU ECBF}
\label{subsec:static_obstacle_cbf}

The suddenly appearing RU is modeled as a static obstacle at fixed
position $(X_{\mathrm{RU}},Y_{\mathrm{RU}})$ over the short emergency
lane change, which is a special case of the ECBF framework with constant
environment state.
We define the RU barrier as
\begin{equation}
    H_{\mathrm{RU}}(x)
    \coloneqq
    a_{\mathrm{RU}}(X_e - X_{\mathrm{RU}})^2
    + b_{\mathrm{RU}}(Y_e - Y_{\mathrm{RU}})^2 - 1,
    \label{eq:H_RU}
\end{equation}
where $a_{\mathrm{RU}} = 1/(r_a^{\mathrm{RU}})^2$ and
$b_{\mathrm{RU}} = 1/(r_b^{\mathrm{RU}})^2$.
Since the RU position is constant, $H_{\mathrm{RU}}$ depends
only on the EV state, and its Lie derivatives are taken with respect to
the EV dynamics \eqref{eq:ev_dyn}. The corresponding ECBF constraint is
\begin{equation}
    L_f H_{\mathrm{RU}}(x) + L_g H_{\mathrm{RU}}(x)\,u
    \ge -\alpha_{\mathrm{RU}}\!\bigl(H_{\mathrm{RU}}(x)\bigr).
    \label{eq:cbf_RU_cond}
\end{equation}
where $\alpha_{\mathrm{RU}}$ is a class-$\mathcal{K}$ function.
Since $a_e$ does not appear in \eqref{eq:cbf_RU_cond}, the constraint is
affine only in $\delta_e$ and reduces to
\begin{equation}
    [L_g H_{\mathrm{RU}}(x)]_2\,\delta_e
    \ge
    -\alpha_{\mathrm{RU}}\!\bigl(H_{\mathrm{RU}}(x)\bigr)
    - L_f H_{\mathrm{RU}}(x).
    \label{eq:delta_cbf_halfspace}
\end{equation}

\section{Interaction-aware ECBF Design}
\label{sec:iecbf_design}

\subsection{SV Dynamics via P-IDM}
\label{subsec:sv_idm}

The proposed method models SV dynamics based on the P-IDM car-following model in \cite{brito2022learning}. Unlike standard IDM, which assigns the leader based on the candidate vehicle’s current lateral position and thus reacts only after lane intrusion, P-IDM uses the predicted lateral position at a short horizon $N_{\mathrm{p}}$ ahead. This better captures anticipatory SV behavior in lane-change scenarios, where the candidate vehicle is the EV.

To encode this leader-selection logic compactly, we introduce the gate
variable
\begin{equation}
    \omega \coloneqq
    \begin{cases}
        1, & \text{if } X_e > X_s
        \text{ and }
        \left| \tilde{Y}_{e,N_{\mathrm{p}}} - Y_s \right| < c,\\
        0, & \text{otherwise,}
    \end{cases}
    \label{eq:pidm_gate}
\end{equation}
where $c$ is a cooperation threshold that determines the SV's
willingness to yield. Thus, $\omega=1$ means that the EV is treated as
the SV's leader based on its predicted lane intrusion.
When the EV is selected as leader, the SV longitudinal acceleration is
computed using the standard IDM~\cite{treiber2000congested},
\begin{equation}
    \begin{split}
        a_{\mathrm{idm}}\coloneqq
        a_{\max}\!\left[
        1 - \left(\frac{v_s}{v^*}\right)^{\!4}
          - \left(\frac{s^*\!\left(v_s,\,\Delta v_s\right)}
                      {\Delta x_s}\right)^{\!2}
    \right],
    \end{split}
    \label{eq:idm_accel}
\end{equation}
where $v^*$ is the desired speed, $a_{\max}$ is the maximum
acceleration, $\Delta x_s = X_e - X_s$ is the bumper-to-bumper gap
between the EV and the SV, and $\Delta v_s = v_s - v_e$ is the
relative approach speed. The desired minimum gap is given by
$s^*(v_s,\Delta v_s)
    = s_0 + v_s\,T_{\rm idm}
      + \frac{v_s\,\Delta v_s}{2\sqrt{a_{\max} b}}$,
where $s_0$ is the standstill distance, $T_{\rm idm}$ is the desired
time headway, and $b$ is the comfortable deceleration.
%
$a_{\mathrm{free}}
    \coloneqq
    a_{\max}\!\left[1-\left(\frac{v_s}{v^*}\right)^4\right]$.
Then the resulting P-IDM acceleration can be written as
\begin{equation}
    a_{\mathrm{pidm}}
    \coloneqq
    \omega\, a_{\mathrm{idm}}(\Delta x_s,\, v_s,\, \Delta v_s)
    + (1-\omega)\, a_{\mathrm{free}}(v_s).
    \label{eq:a_pidm_joint}
\end{equation}
Accordingly, the SV dynamics in the proposed model are
\begin{equation}
    \dot x_s = f_s(x,x_s),
    \qquad
    f_s(x,x_s) \coloneqq
    \begin{bmatrix}
        v_s \\
        0 \\
        a_{\mathrm{pidm}}
    \end{bmatrix},
    \label{eq:sv_dyn}
\end{equation}
where the dependence on the EV state enters through
$a_{\mathrm{pidm}}$.

\subsection{Interaction-Aware Joint EV--SV Model}
\label{subsec:physics_joint_world_model}

To explicitly capture the interaction between the EV and SV, we
combine \eqref{eq:ev_dyn} with  \eqref{eq:sv_dyn}. With the joint state \eqref{eq:zstate}, the interaction-aware joint dynamics take the control-affine form
\begin{equation}
    \dot z = F(z) + G(z)u,
    \label{eq:physics_world_model_ct}
\end{equation}
where
$F(z) \coloneqq
    \begin{bmatrix}
        f(x) \\
        f_s(x,x_s)
    \end{bmatrix},
    G(z) \coloneqq
    \begin{bmatrix}
        g(x) \\
        \mathbf{0}_{3\times 2}
    \end{bmatrix}$.
Here, $F(z)$ represents the
natural evolution of the combined EV–SV state, including the SV
response, and $G(z)$ specifies how the EV input
$u$ enters the joint dynamics.


\subsection{IECBF Constraints for SV}
\label{subsec:cbf_constraints}
In the proposed method, safety assessment is extended over a short prediction horizon under the interaction-aware SV response, since interaction effects may not be immediately visible but emerge as the SV reacts to the EV. Accordingly, we use two ECBF constraints: a current-position ECBF for present-state safety and a predictive ECBF over the finite horizon under the interaction-aware joint model.
\subsubsection{One-Step IECBF}
\label{subsubsec:cbf_current}
We use the same SV barrier function \(H^{\mathrm{curr}}_{\mathrm{SV}}(z) = H_{\mathrm{SV}}(z)\) as in the
baseline ECBF \eqref{eq:H_baseline}. From here on, we write
\(H^{\mathrm{curr}}_{\mathrm{SV}} \coloneqq H^{\mathrm{curr}}_{\mathrm{SV}}(z)\)
for brevity. The one-step ECBF constraint under
the joint dynamics \eqref{eq:physics_world_model_ct}
is then
\begin{equation}
    L_F H^{\mathrm{curr}}_{\mathrm{SV}} + L_G H^{\mathrm{curr}}_{\mathrm{SV}}\,u
    \ge
    -\alpha_{\mathrm{SV}}\!\bigl(H^{\mathrm{curr}}_{\mathrm{SV}}\bigr).
    \label{eq:iecbf_curr_cond}
\end{equation}
Compared with the baseline  \eqref{eq:ecbf_cond_baseline}, the term \(L_F H^{\mathrm{curr}}_{\mathrm{SV}}\)
now reflects the interaction-aware SV dynamics through
\(a_{\mathrm{pidm}}\). 

\subsubsection{Predictive ECBF}
\label{subsubsec:cbf_predictive}
To explicitly
account for near-future interaction, we construct a predictive barrier
by rolling out the joint model \eqref{eq:physics_world_model_ct} over a
finite horizon $N$.
\paragraph{Nominal rollout controller}
To roll out \eqref{eq:physics_world_model_ct}, we use a nominal steering
controller that drives the EV toward the target lane while satisfying
the RU ECBF constraint \eqref{eq:cbf_RU_cond}. The SV barrier is not
enforced in the rollout, since SV safety is handled by the outer
predictive ECBF and enforcing it in the inner rollout would bias the
predicted trajectory.

By \eqref{eq:delta_cbf_halfspace}, the RU ECBF together with the
steering bounds yields the feasible interval
$\delta_e \in [\delta_{\min}^{\mathrm{RU}},\,\delta_{\max}^{\mathrm{RU}}]$.
Within this interval, the nominal steering is chosen from the 1-D soft
CLF--CBF problem
\begin{equation*}
\label{eq:rollout_qp}
\begin{aligned}
\min_{\delta_e \in [\delta_{\min}^{\mathrm{RU}},\,\delta_{\max}^{\mathrm{RU}}]}
\quad
& H_\delta\,\delta_e^2
+ p_y \max(0,\phi_y)^2
+ p_\psi \max(0,\phi_\psi)^2,
\end{aligned}
\end{equation*}
where $\phi_y \coloneqq L_f V_y + [L_g V_y]_2\,\delta_e + \alpha_y(V_y),
\phi_\psi \coloneqq L_f V_\psi + [L_g V_\psi]_2\,\delta_e + \alpha_\psi(V_\psi)$, 
following the standard soft-constraint CLF--CBF construction
of~\cite{ames2016control}. Since this is a scalar soft CLF--CBF problem,
its minimizer admits a closed-form solution~\cite{jankovic2018robust},
so no online optimization is required in the rollout. Denoting the
resulting steering command at prediction step \(k\) by
\(\delta_{e,k}^{\mathrm{nom}}\), the nominal rollout input is
\begin{equation}
    u_k^{\mathrm{nom}}
    \coloneqq
    \begin{bmatrix}
        0 \\
        \delta_{e,k}^{\mathrm{nom}}
    \end{bmatrix}.
    \label{eq:u_nom}
\end{equation}
The nominal rollout sets \(a_e=0\) because the CLF and RU ECBF terms
used here determine only the steering input \(\delta_e\); \(a_e\) is
neither constrained nor optimized in the inner rollout.

\paragraph{Predictive barrier and constraint}
Let \(\{z_k\}_{k=0}^{N}\) denote the rollout from the current state
\(z_0\), where the EV is propagated using \eqref{eq:ev_dyn} under the
nominal control \eqref{eq:u_nom} over a horizon $N$.
At each rollout step, we evaluate \(H_{\mathrm{SV}}(z_k)\). The most
critical predicted step is
\begin{equation}
    k^* = \arg\min_{k \in \{0,\dots,N\}} H_{\mathrm{SV}}(z_k).
    \label{eq:kstar}
\end{equation}
We then define the predictive barrier as
\begin{equation}
    H^{\mathrm{pred}}_{\mathrm{SV}}(z_0)
    \coloneqq
    H_{\mathrm{SV}}(z_{k^*}),
    \label{eq:H_pred}
\end{equation}
i.e., the minimum predicted barrier value along the rollout.
Let \(J_{k^*} \coloneqq \partial z_{k^*}/\partial z_0\) denote the
rollout Jacobian from the current state to the critical predicted state.
Then
\begin{equation}
    \begin{aligned}
        L_F H^{\mathrm{pred}}_{\mathrm{SV}}(z_0)
        &=
        \nabla_z H_{\mathrm{SV}}(z_{k^*})\,J_{k^*}\,F(z_0), \\
        L_G H^{\mathrm{pred}}_{\mathrm{SV}}(z_0)
        &=
        \nabla_z H_{\mathrm{SV}}(z_{k^*})\,J_{k^*}\,G(z_0).
    \end{aligned}
    \label{eq:LFG_Hpred}
\end{equation}
Using the same class-\(\mathcal{K}\) function \(\alpha_{\mathrm{SV}}\)
as in \eqref{eq:iecbf_curr_cond}, and writing
\(H^{\mathrm{pred}}_{\mathrm{SV}} \coloneqq H^{\mathrm{pred}}_{\mathrm{SV}}(z_0)\)
for brevity, the predictive ECBF constraint enforced in
the outer QP is
\begin{equation}
    L_F H^{\mathrm{pred}}_{\mathrm{SV}} + L_G H^{\mathrm{pred}}_{\mathrm{SV}}\,u
    \ge
    -\alpha_{\mathrm{SV}}\!\bigl(H^{\mathrm{pred}}_{\mathrm{SV}}\bigr).
    \label{eq:iecbf_pred_cond}
\end{equation}
Constraint \eqref{eq:iecbf_pred_cond} complements the one-step
constraint \eqref{eq:iecbf_curr_cond}: the latter enforces safety at the
current state, while the former accounts for the most critical EV--SV
configuration over the prediction horizon under the interaction-aware
joint model.

\subsection{Nominal CLF-IECBF QP}
The nominal
CLF--IECBF QP is finally given as
\begin{equation}
\label{eq:robust_clf_cbf_qp}
\begin{aligned}
u^\star
&=
\arg\min_{\substack{u\in\mathcal U,\\ \delta_y,\,\delta_\psi}}
\;
\frac{1}{2}u^\top Q u
+\frac{1}{2}p_y\delta_y^2
+\frac{1}{2}p_\psi\delta_\psi^2 \\
\text{s.t.}\quad
&\eqref{eq:clf_y_constraint},\ \eqref{eq:clf_psi_constraint},\ \eqref{eq:cbf_RU_cond},\
\eqref{eq:iecbf_curr_cond},\
\eqref{eq:iecbf_pred_cond}.
\end{aligned}
\end{equation}

\section{Robust IECBF Under IDM Model Uncertainty}
\label{sec:robust_iecbf}

The IECBF in Section~\ref{sec:iecbf_design} uses a nominal P-IDM model for SV prediction. In practice, the true SV behavior may deviate due to uncertain driver characteristics. This mismatch appears in both the SV acceleration magnitude after interaction becomes active and the timing of interaction gate activation. To address both, this section extends the IECBF with observer-based correction for acceleration mismatch and a multi-style rollout for gateway uncertainty.

\subsection{Uncertainty Modeling}

\subsubsection{IDM Parameter Uncertainty}
\label{subsubsec:idm_param_uncertainty}

Let
\(\theta \coloneqq (a_{\max},\,v^*,\,s_0,\,T_{\mathrm{idm}},\,b)\)
denote the IDM parameter vector.
For later use in the multi-model rollout, let \(j\in\mathcal{J}\) index
a finite set of representative IDM parameter presets, with each
\(j\) associated with an IDM parameter vector \(\theta^{(j)}\).
Let \(\theta_{\mathrm{nom}}\) denotes the preset nominal parameter set.
Let \(\theta^*\) denote the unknown true IDM parameter vector.
Then the true SV longitudinal acceleration is modeled as
\begin{equation}
    \dot v_s
    =
    a_{\mathrm{idm}}(\theta_{\mathrm{nom}})
    + \Delta_\theta,
    \label{eq:sv_accel_uncertain}
\end{equation}
where \(\Delta_\theta
    \coloneqq
    a_{\mathrm{idm}}(\theta^*)
    -
    a_{\mathrm{idm}}(\theta_{\mathrm{nom}})\).
Accordingly, the interaction-aware joint dynamics
\eqref{eq:physics_world_model_ct} are rewritten as
\begin{align}
    \dot z = F(z) + G(z)u + B_s\,\Delta_\theta,
    \label{eq:perturbed_joint_dyn}
\end{align}
where
$B_s \coloneqq
\begin{bmatrix}
0 & 0 & 0 & 0 & 0 & 0 & 1
\end{bmatrix}^{\!\top}$.

We assume that the uncertainty and its rate of change are bounded:
\begin{equation}
    |\Delta_\theta| \le \delta_b,
    \qquad
    |\dot{\Delta}_\theta| \le \delta_L,
    \label{eq:uncertainty_bounds}
\end{equation}
for known constants \(\delta_b,\delta_L>0\), determined offline from
the admissible range of SV driver behavior.

The sensitivity of the predictive barrier to the acceleration
perturbation \(B_s\Delta_\theta\) at the current state is defined as
\(
\sigma
\coloneqq
\nabla_z H_{\mathrm{SV}}(z_{k^*})\,J_{k^*}\,B_s
=
[\nabla_z H_{\mathrm{SV}}(z_{k^*})\,J_{k^*}]_{v_s},
\)
i.e., the \(v_s\)-component of the propagated gradient. $\sigma$
quantifies how the current SV acceleration perturbation affects the
barrier at the critical step \(k^*\), and is used below in the robust
predictive IECBF constraint.


\subsubsection{Gateway activation uncertainty}
\label{subsubsec:gateway_uncertainty}


To capture gateway activation uncertainty, we consider a finite set of
representative P-IDM gateway parameter pairs
\(\varphi^{(i)}=(N^{(i)},c_{\mathrm{th}}^{(i)})\),
\(i\in\mathcal{I}\). Since gate activation depends on both the gateway
preset and the IDM model used in the rollout, we denote the
corresponding interaction gate for model pair \((i,j)\) by
\(\omega^{(i,j)}\), where \(i\in\mathcal{I}\) and \(j\in\mathcal{J}\).
Hence, even for the same \(\varphi^{(i)}\), different IDM presets
\(\theta^{(j)}\) can yield different gate activations and switching
times between free-road and car-following behavior along the predicted
trajectory.
This uncertainty is relevant only when \(\omega^{(i,j)}=1\), i.e.,
when the SV is predicted to react to the EV through the IDM
car-following model under \((\varphi^{(i)},\theta^{(j)})\). When
\(\omega^{(i,j)}=0\), the SV follows the free-road model
\(a_{\mathrm{free}}(\theta^{(j)})\), so the car-following mismatch
\(\Delta_\theta\) is inactive. Accordingly, the robust correction for
IDM parameter uncertainty is applied only while the corresponding gate
\(\omega^{(i,j)}\) is active.

\subsection{Gate-Activated Car-Following Observer}
\label{subsec:uncertainty_estimator}

Following~\cite{dacs2025robust}, we estimate the SV car-following
acceleration mismatch online from the measured SV speed \(v_s\), only
when the corresponding interaction gate is active, i.e.,
\(\omega^{(i,j)}=1\). 
%
Define the uncertainty estimate as
$\hat\Delta_\theta \coloneqq \lambda_s v_s - \xi$,
where \(\xi \in \mathbb{R}\) is the observer state and
\(\lambda_s > 0\) is the observer gain. During intervals with
\(\omega^{(i,j)}=1\), the observer dynamics are
\begin{equation}
    \dot\xi
    =
    \lambda_s\bigl(
        a_{\mathrm{idm}}(\theta^{(j)})
        + \hat\Delta_\theta
    \bigr).
    \label{eq:cf_estimator_dynamics}
\end{equation}
At each \(0\to1\) transition of \(\omega^{(i,j)}\), the observer state
is reset according to
$\xi \leftarrow \lambda_s v_s$
so that \(\hat\Delta_\theta = 0\) at the reset instant.

Let \(t_{\mathrm{cf}} \ge 0\) denote the elapsed time since the most
recent observer-gate activation, and define the estimation error as
\begin{equation}
    e_\theta \coloneqq \Delta_\theta - \hat\Delta_\theta.
    \label{eq:estimation_error}
\end{equation}
\begin{lemma}
\label{lem:error_bound}
Suppose the uncertainty \(\Delta_\theta\) satisfies
\eqref{eq:uncertainty_bounds}. Then, along each interval with
\(\omega^{(i,j)}=1\), the estimation error \(e_\theta\)
satisfies
$\dot e_\theta = \dot\Delta_\theta - \lambda_s e_\theta$,
and is bounded by
\begin{equation}
    |e_\theta(t)| \le
    \bar e(t_{\mathrm{cf}})
    \coloneqq
    \left(\delta_b - \frac{\delta_L}{\lambda_s}\right)
    e^{-\lambda_s t_{\mathrm{cf}}}
    + \frac{\delta_L}{\lambda_s},
    \label{eq:error_bound}
\end{equation}
for all \(t_{\mathrm{cf}} \ge 0\).
\end{lemma}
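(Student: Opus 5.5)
The plan is to derive a scalar linear ODE for \(e_\theta\) on a single gate-active interval, solve it by variation of constants starting from the reset instant, and then bound the two resulting terms separately using \eqref{eq:uncertainty_bounds}. Throughout, I fix an interval on which \(\omega^{(i,j)}=1\). I measure time from its most recent \(0\to1\) transition, so that \(t_{\mathrm{cf}}=0\) at the reset. I interpret the mismatch \(\Delta_\theta\) in \eqref{eq:sv_accel_uncertain} as taken relative to the IDM preset \(\theta^{(j)}\) used in \eqref{eq:cf_estimator_dynamics}, so that on this interval \(\dot v_s = a_{\mathrm{idm}}(\theta^{(j)})+\Delta_\theta\). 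This identification of the nominal model with the rollout preset is the one point of bookkeeping I would state explicitly, since the cancellation below depends on it.

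\textbf{Step 1 (error dynamics).} Differentiate \(\hat\Delta_\theta=\lambda_s v_s-\xi\) and substitute the SV acceleration and \eqref{eq:cf_estimator_dynamics}:
\begin{equation*}
\dot{\hat\Delta}_\theta=\lambda_s\bigl(a_{\mathrm{idm}}(\theta^{(j)})+\Delta_\theta\bigr)-\lambda_s\bigl(a_{\mathrm{idm}}(\theta^{(j)})+\hat\Delta_\theta\bigr)=\lambda_s e_\theta .
\end{equation*}
The model term cancels, which is why the observer does not need to know \(\theta^*\). Hence \(\dot e_\theta=\dot\Delta_\theta-\dot{\hat\Delta}_\theta=\dot\Delta_\theta-\lambda_s e_\theta\), which is the first claim.

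\textbf{Step 2 (explicit solution and bound).} Since the ODE is linear and scalar, variation of constants gives
\begin{equation*}
e_\theta(t_{\mathrm{cf}})=e^{-\lambda_s t_{\mathrm{cf}}}e_\theta(0)+\int_0^{t_{\mathrm{cf}}}e^{-\lambda_s(t_{\mathrm{cf}}-\tau)}\dot\Delta_\theta(\tau)\,d\tau .
\end{equation*}

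\begin{itemize}
\item The reset \(\xi\leftarrow\lambda_s v_s\) gives \(\hat\Delta_\theta(0)=0\). Therefore \(e_\theta(0)=\Delta_\theta(0)\) and \(|e_\theta(0)|\le\delta_b\).
\item Using \(|\dot\Delta_\theta|\le\delta_L\), the integral is bounded in absolute value by \(\frac{\delta_L}{\lambda_s}\bigl(1-e^{-\lambda_s t_{\mathrm{cf}}}\bigr)\).
\end{itemize}

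Adding the two contributions gives
\begin{equation*}
|e_\theta|\le\delta_b e^{-\lambda_s t_{\mathrm{cf}}}+\frac{\delta_L}{\lambda_s}\bigl(1-e^{-\lambda_s t_{\mathrm{cf}}}\bigr),
\end{equation*}
which rearranges exactly to \(\bar e(t_{\mathrm{cf}})\) in \eqref{eq:error_bound}. As a cross-check, I could instead apply a comparison argument to \(\frac{d}{dt}|e_\theta|\le\delta_L-\lambda_s|e_\theta|\), which yields the same bound.

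\textbf{Main obstacle.} The calculus is routine, so the only real care is the hybrid aspect. The bound must restart at every activation, because the observer is frozen or irrelevant while \(\omega^{(i,j)}=0\). This is handled by the reset, which makes \(e_\theta(0)=\Delta_\theta(0)\) regardless of the prior history, so no information from earlier intervals is needed. I would also note that \(\Delta_\theta\) needs to be absolutely continuous within the interval for the integral representation to hold; this is implicit in the assumption that \(\dot\Delta_\theta\) exists and is bounded.
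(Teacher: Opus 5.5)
Your proof is correct, but it is not the paper's route, because the paper gives no derivation at all: it only invokes Lemma~1 of the cited disturbance-observer reference. You instead prove the result from scratch. You compute $\dot{\hat\Delta}_\theta=\lambda_s e_\theta$, solve the scalar linear error equation by variation of constants from the reset instant, and bound the free response by $|e_\theta(0)|=|\Delta_\theta(0)|\le\delta_b$ and the forced response by $\delta_L(1-e^{-\lambda_s t_{\mathrm{cf}}})/\lambda_s$.

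The citation buys brevity. Your derivation buys transparency about what the citation silently requires. Most notably, you correctly observe that the model term cancels only if the mismatch is measured relative to the preset $\theta^{(j)}$ used in the observer. The paper, however, defines $\Delta_\theta$ relative to $\theta_{\mathrm{nom}}$. With the paper's literal definitions, the error equation picks up the extra term $-\lambda_s\bigl(a_{\mathrm{idm}}(\theta_{\mathrm{nom}})-a_{\mathrm{idm}}(\theta^{(j)})\bigr)$. So the stated dynamics $\dot e_\theta=\dot\Delta_\theta-\lambda_s e_\theta$ hold only under your identification. Equivalently, they hold with $\Delta_\theta$ read as $\dot v_s-a_{\mathrm{idm}}(\theta^{(j)})$ and the bounds assumed for that quantity.

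You also make explicit that the reset $\xi\leftarrow\lambda_s v_s$ is what turns a generic initial error into $\delta_b$ and restarts the bound at each activation. That is what licenses writing the bound in $t_{\mathrm{cf}}$ rather than absolute time.

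Finally, since the error is scalar, variation of constants is exact. It sidesteps the nondifferentiability of $|e_\theta|$ at zero, which a Lyapunov or comparison argument (such as your cross-check) has to handle with Dini derivatives.
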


\begin{proof}
The result follows directly from Lemma~1 in~\cite{dacs2025robust}.
\end{proof}
Thus, the bound resets to \(\delta_b\) at each \(0\to1\) transition of
\(\omega^{(i,j)}\) and then decays monotonically to
\(\delta_L/\lambda_s\) during the corresponding interaction-active
interval. 
Define \(t_{\mathrm{conv}} \coloneqq 2/\lambda_s\), i.e., two time
constants after gate activation. Since the exponential term
in~\eqref{eq:error_bound} is then reduced to \(e^{-2}\approx 13.5\%\)
of its initial value, \(t_{\mathrm{conv}}\) is used as a conservative
proxy for observer convergence, and \(\hat\Delta_\theta\) is treated as
reliable only when \(t_{\mathrm{cf}} > t_{\mathrm{conv}}\).

\subsection{Robust Predictive ECBF Constraint}

\subsubsection{Case-Dependent Predictive Rollout}
\label{subsubsec:case_dependent_rollout}

The
overall uncertainty set is the Cartesian product $\mathcal{M} \coloneqq \mathcal{I} \times \mathcal{J}$,
where each model $m=(i,j)\in\mathcal{M}$ defines one complete SV
driver description by combining one gateway preset and one IDM
kinematic preset.
Prediction is then organized by
$\omega^{(i,j)}$, $t_{\mathrm{cf}}\ge 0$,
and a braking detection condition evaluated once the observer has
converged. Specifically, let
\begin{equation}
    \mathcal{B}^{(i,j)} \coloneqq
    \Bigl\{
        a_{\mathrm{idm}}(\theta^{(j)})
        + \hat\Delta_\theta
        \;<\;
        a_{\mathrm{free}}(\theta^{(j)})
        - \varepsilon
    \Bigr\},
    \label{eq:braking_condition}
\end{equation}
denote the event that, under the model pair $(i,j)$, the
observer-corrected SV acceleration falls below the corresponding
free-flow prediction by a margin $\varepsilon>0$, indicating that the
SV has begun car-following deceleration. Based on $\omega^{(i,j)}$,
$t_{\mathrm{cf}}$, and $\mathcal{B}^{(i,j)}$, three cases are
considered.

\paragraph{Case 1: $\omega^{(i,j)}=0$ for all $(i,j)\in\mathcal{M}$.}
No interaction gate is active under any model pair, so the SV remains
in free-flow for all \((i,j)\in\mathcal{M}\). The observer is inactive,
and no online correction is available.

\paragraph{Case 2: $\omega^{(i,j)}=1$ for at least one $(i,j)\in\mathcal{M}$ and
$t_{\mathrm{cf}}\le t_{\mathrm{conv}}$.}
At least one gate is active and the observer is running, but
\(\hat\Delta_\theta\) is still transient and not yet reliable.

\paragraph{Case 3: $\omega^{(i,j)}=1$ for at least one $(i,j)\in\mathcal{M}$,
$t_{\mathrm{cf}}>t_{\mathrm{conv}}$, and $\mathcal{B}^{(i,j)}$ holds.}
The observer has converged and the corrected prediction confirms active
SV deceleration for the corresponding model pair \((i,j)\). Only in
this case does prediction switch from the multi-model evaluation over
\(\mathcal{M}\) to a single observer-corrected rollout for the active
model pair, with correction \(\hat\Delta_\theta\) injected.

If \(\omega^{(i,j)}=1\) for at least one \((i,j)\in\mathcal{M}\) and
\(t_{\mathrm{cf}}>t_{\mathrm{conv}}\) but \(\mathcal{B}^{(i,j)}\) does
not hold, then the observer has converged but the SV still remains in
free-flow under the active model pair. The situation therefore reduces
to Case~1. The observer continues running, and the system switches to
Case~3 once \(\mathcal{B}^{(i,j)}\) is satisfied.

\subsubsection{Multi-Model Rollout for Cases~1 and~2}
\label{subsubsec:multistyle_rollout}

In Cases~1 and~2, all $|\mathcal{M}|$ models are rolled out from the
current state $z_0$. For each $m=(i,j)\in\mathcal{M}$, $N$ forward
steps are simulated using the joint dynamics $f$, with the SV
acceleration at step $k$ given by
$a_{\mathrm{sv},k}^{(m)}
    =
    a_{\mathrm{pidm}}\!\bigl(
        \theta^{(j)}
    \bigr).$
No observer correction is injected.
The worst-case evaluation point is selected over all trajectories
and prediction steps:
\begin{equation}
    (m^*,\,k^*)
    \;\coloneqq\;
    \arg\min_{\substack{m\in\mathcal{M},\\ k\in\{1,\dots,N\}}}
    H_{\mathrm{SV}}\!\bigl(z_k^{(m)}\bigr).
    \label{eq:kstar_robust}
\end{equation}
$J_{k^*} = \mathrm{d}z_{k^*}/\mathrm{d}z_0$ is then
propagated through the winning trajectory $\{z_k^{(m^*)}\}$ using the
parameter set of $m^*$.

\subsubsection{Observer-Corrected Rollout for Case~3}
\label{subsubsec:observer_rollout}

In Case~3, the observer has converged and SV braking is confirmed for a model pair \((i,j)\) via \(\mathcal{B}^{(i,j)}\) in
\eqref{eq:braking_condition}. A single rollout is then performed for
that pair, with observer correction injected:
\begin{equation}
    a_{\mathrm{sv},k}^{(i,j)}
    =
    a_{\mathrm{pidm}}\!\bigl(
        \theta^{(j)},\,\varphi^{(i)}
    \bigr)
    +
    \hat\Delta_\theta\,\omega_k^{(i,j)}.
    \label{eq:case3_sv_accel}
\end{equation}
The critical step \(k^*\) is selected via
\eqref{eq:kstar}, and the Jacobian in \eqref{eq:LFG_Hpred} is propagated
along the resulting corrected trajectory.



\subsubsection{Robust Predictive and Current-Position IECBF Constraints}
\label{subsubsec:robust_predictive_barrier}

This subsection applies only to Case~3. In this case, both the predictive barrier and its
current-position special case are evaluated from the
observer-corrected rollout in \eqref{eq:case3_sv_accel}.

\paragraph{Predictive IECBF}
Under the perturbed joint dynamics~\eqref{eq:perturbed_joint_dyn}, the
time derivative of \(H^{\mathrm{pred}}_{\mathrm{SV}}\) along the
true trajectory satisfies
$\dot H^{\mathrm{pred}}_{\mathrm{SV}}
    =
    L_F H^{\mathrm{pred}}_{\mathrm{SV}}
    + L_G H^{\mathrm{pred}}_{\mathrm{SV}}\,u
    + \sigma\,\Delta_\theta \ge
    \widetilde L_F H^{\mathrm{pred}}_{\mathrm{SV}}
    + L_G H^{\mathrm{pred}}_{\mathrm{SV}}\,u,
$
where $\tau\coloneqq
|\sigma|\,\bar e(t_{\mathrm{cf}})$ and $\widetilde L_F H^{\mathrm{pred}}_{\mathrm{SV}}
    \coloneqq
    L_F H^{\mathrm{pred}}_{\mathrm{SV}}
    + \sigma\,\hat\Delta_\theta$.
Accordingly, the robust predictive IECBF constraint for Case~3 is
\begin{equation}
\begin{split}
    \widetilde L_F H^{\mathrm{pred}}_{\mathrm{SV}}
    + L_G H^{\mathrm{pred}}_{\mathrm{SV}}\,u
    - \tau
    \ge
    -\alpha_{\mathrm{SV}}\!\bigl(H^{\mathrm{pred}}_{\mathrm{SV}}\bigr).
\end{split}
\label{eq:robust_cbf_constraint}
\end{equation}

\paragraph{Current-position guard}
The current-position guard is the \(N=0\) special case of the predictive
barrier, i.e., \(H_{\mathrm{SV}}^{\mathrm{curr}}=H_{\mathrm{SV}}(z_0)\). Accordingly, define $\sigma_0
    \coloneqq
    \frac{\partial H_{\mathrm{SV}}^{\mathrm{curr}}}{\partial v_s}(z_0)$, $\tau_0
    \coloneqq
    |\sigma_0|\,\bar e(t_{\mathrm{cf}})$ and $\widetilde L_F H_{\mathrm{SV}}^{\mathrm{curr}}
    \coloneqq
    L_F H_{\mathrm{SV}}^{\mathrm{curr}} + \sigma_0\,\hat\Delta_\theta$, then the robust current-position guard for Case~3 is
\begin{equation}
    \widetilde L_F H_{\mathrm{SV}}^{\mathrm{curr}}
    + L_G H_{\mathrm{SV}}^{\mathrm{curr}}\,u
    - \tau_0
    \ge
    -\alpha_{\mathrm{SV}}\!\bigl(H_{\mathrm{SV}}^{\mathrm{curr}}\bigr).
    \label{eq:robust_guard_constraint}
\end{equation}

\begin{theorem}[Robust current-position safety in Case~3]
\label{thm:robust_current_safety}
Consider Case~3. Suppose the true SV acceleration satisfies
\eqref{eq:sv_accel_uncertain} and \eqref{eq:uncertainty_bounds}, and
let \(\hat\Delta_\theta\) and \(\bar e(t_{\mathrm{cf}})\) be generated
as in Section~\ref{subsec:uncertainty_estimator}. If \(u\) is locally
Lipschitz and satisfies \eqref{eq:robust_guard_constraint} for all
\(t\ge0\), then the set
$\mathcal{C}^{\mathrm{curr}}_{\mathrm{SV}}
=
\{ z \in \mathcal{Z} \mid H_{\mathrm{SV}}^{\mathrm{curr}}(z) \ge 0 \}$
is forward invariant. Hence, if
$H_{\mathrm{SV}}^{\mathrm{curr}}(z(0)) \ge 0$ at \(t=0\), then
$H_{\mathrm{SV}}^{\mathrm{curr}}(z(t)) \ge 0$ for all \(t\ge0\).
\end{theorem}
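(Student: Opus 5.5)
The plan is to reduce the statement to the standard CBF forward-invariance argument for the perturbed joint dynamics \eqref{eq:perturbed_joint_dyn}. The strategy is to show that the robust guard \eqref{eq:robust_guard_constraint} implies the nominal-form inequality $\dot H_{\mathrm{SV}}^{\mathrm{curr}} \ge -\alpha_{\mathrm{SV}}(H_{\mathrm{SV}}^{\mathrm{curr}})$ along true trajectories. We then conclude by the comparison lemma (or Nagumo's theorem), exactly as in the invariance result quoted after Definition~\ref{def:ecbf_baseline}.

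\textbf{Step 1: time derivative along the true dynamics.} Since $H_{\mathrm{SV}}^{\mathrm{curr}}=H_{\mathrm{SV}}(z)$ is continuously differentiable and $B_s$ selects the $v_s$ component, we have
\begin{equation*}
\dot H_{\mathrm{SV}}^{\mathrm{curr}} = L_F H_{\mathrm{SV}}^{\mathrm{curr}} + L_G H_{\mathrm{SV}}^{\mathrm{curr}}\,u + \sigma_0\,\Delta_\theta .
\end{equation*}
We split $\Delta_\theta=\hat\Delta_\theta+e_\theta$ using \eqref{eq:estimation_error}, which gives
\begin{equation*}
\dot H_{\mathrm{SV}}^{\mathrm{curr}} = \widetilde L_F H_{\mathrm{SV}}^{\mathrm{curr}} + L_G H_{\mathrm{SV}}^{\mathrm{curr}}\,u + \sigma_0\,e_\theta .
\end{equation*}

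\textbf{Step 2: bounding the residual.} In Case~3 the gate $\omega^{(i,j)}$ is active and the observer runs, so Lemma~\ref{lem:error_bound} gives $|e_\theta(t)|\le \bar e(t_{\mathrm{cf}})$. Hence $\sigma_0 e_\theta \ge -|\sigma_0|\bar e(t_{\mathrm{cf}}) = -\tau_0$. Combining this with \eqref{eq:robust_guard_constraint} yields
\begin{equation*}
\dot H_{\mathrm{SV}}^{\mathrm{curr}} \ge \widetilde L_F H_{\mathrm{SV}}^{\mathrm{curr}} + L_G H_{\mathrm{SV}}^{\mathrm{curr}}\,u - \tau_0 \ge -\alpha_{\mathrm{SV}}\bigl(H_{\mathrm{SV}}^{\mathrm{curr}}\bigr).
\end{equation*}

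\textbf{Step 3: existence of solutions and invariance.} The closed loop must be well posed. $u$ is locally Lipschitz by hypothesis, $F,G$ are locally Lipschitz while the gate is constant, and $\Delta_\theta$ is Lipschitz in time by $|\dot\Delta_\theta|\le\delta_L$, so Carathéodory/classical solutions exist on the Case~3 interval. Let $h(t)=H_{\mathrm{SV}}^{\mathrm{curr}}(z(t))$, so $\dot h\ge-\alpha_{\mathrm{SV}}(h)$ with $h(0)\ge0$. The comparison lemma against $\dot y=-\alpha_{\mathrm{SV}}(y)$, $y(0)=h(0)\ge0$, whose solution stays nonnegative, gives $h(t)\ge0$. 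Equivalently, on the boundary $h=0$ we get $\dot h\ge 0$, and Nagumo's theorem applies.

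\textbf{Main obstacle.} The delicate point is Step~2/3 across observer resets and gate switching. At a $0\to1$ transition $\xi$ is reset, so $\hat\Delta_\theta$ jumps and $\bar e$ jumps back to $\delta_b$. The vector field $F$ is also discontinuous through $\omega$. I would handle this by arguing piecewise: on each interval with constant gate the argument above applies. At switching instants $z(t)$ is continuous, because only $\dot v_s$ and the estimator state jump, not $z$. Therefore $h$ is continuous, and the nonnegativity of $h$ at the end of one interval serves as the initial condition for the next.

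For this to go through, Lemma~\ref{lem:error_bound} must be valid with $t_{\mathrm{cf}}$ measured from the latest reset. One must also assume, as implicit in ``Case~3 for all $t\ge0$'', that the switching instants are locally finite (no Zeno behavior). If $\alpha_{\mathrm{SV}}$ is only class-$\mathcal K$ rather than locally Lipschitz, I would use the Nagumo boundary condition instead of the comparison lemma.
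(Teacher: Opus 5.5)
Your proposal is correct and takes the same route as the paper. You show that the robust guard forces $\dot H^{\mathrm{curr}}_{\mathrm{SV}}\ge-\alpha_{\mathrm{SV}}(H^{\mathrm{curr}}_{\mathrm{SV}})$ along the perturbed dynamics and conclude by the standard comparison argument; your decomposition $\Delta_\theta=\hat\Delta_\theta+e_\theta$ with the Lemma~\ref{lem:error_bound} bound makes explicit the step the paper leaves implicit. As a side remark, $H_{\mathrm{SV}}$ depends only on $X_e,Y_e,\psi_e,X_s,Y_s$, so $\sigma_0=\partial H^{\mathrm{curr}}_{\mathrm{SV}}/\partial v_s=0$ and $\tau_0=0$; the residual $\sigma_0 e_\theta$ therefore vanishes identically, and the observer-reset subtleties you flag do not actually arise for this barrier.
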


\begin{proof}
By \eqref{eq:robust_guard_constraint},
$\dot H_{\mathrm{SV}}^{\mathrm{curr}}(z)
\ge
-\alpha_{\mathrm{SV}}\!\bigl(H_{\mathrm{SV}}^{\mathrm{curr}}(z)\bigr)$.
Since \(\alpha_{\mathrm{SV}}\) is class-\(\mathcal{K}\), the standard
CBF comparison argument yields forward invariance of
\(\mathcal{C}^{\mathrm{curr}}_{\mathrm{SV}}\)~\cite{dacs2025robust}.
\end{proof}
The predictive IECBF constraint is used as an anticipatory constraint
that accounts for the most critical predicted EV--SV configuration over
the finite horizon. Unlike the current-position guard, it is not used
here to claim a separate forward-invariance result, since
\(H_{\mathrm{SV}}^{\mathrm{pred}}\) is defined through a rollout-based
minimum and is generally nonsmooth due to switching of the critical
step \(k^*\).

\subsection{Robust CLF--IECBF QP}
\label{subsec:robust_qp}
For \(\ell\in\{\mathrm{pred},\mathrm{curr}\}\), define
\[
\widehat L_F H_{\mathrm{SV}}^\ell \coloneqq
\begin{cases}
L_F H_{\mathrm{SV}}^\ell, & \text{Cases~1--2},\\
\widetilde L_F H_{\mathrm{SV}}^\ell-\tau_\ell, & \text{Case~3},
\end{cases}
\]
with \(\tau_{\mathrm{pred}}=\tau\) and \(\tau_{\mathrm{curr}}=\tau_0\).
The robust
CLF--IECBF QP is finally given as
\begin{equation}
\label{eq:robust_clf_cbf_qp}
\begin{aligned}
u^\star
&=
\arg\min_{\substack{u\in\mathcal U,\\ \delta_y,\,\delta_\psi}}
\;
\frac{1}{2}u^\top Q u
+\frac{1}{2}p_y\delta_y^2
+\frac{1}{2}p_\psi\delta_\psi^2 \\
\text{s.t.}\quad
&\eqref{eq:clf_y_constraint},\ \eqref{eq:clf_psi_constraint},\ \eqref{eq:cbf_RU_cond},\\
&\widehat L_F H_{\mathrm{SV}}^\ell
+ L_G H_{\mathrm{SV}}^\ell\,u
\ge
-\alpha_{\mathrm{SV}}\!\bigl(H_{\mathrm{SV}}^\ell\bigr), \ell\in\{\mathrm{pred},\mathrm{curr}\}.
\end{aligned}
\end{equation}


\section{Simulation Results}
\label{sec:simulation_results}

The proposed framework is built on top of \texttt{highway-env}~\cite{highway-env}. All simulations were run on a MacBook Pro equipped with an Apple M3 chip.
%
%
The simulation is set up as described in Section~\ref{sec:sim_setup}. The representative P-IDM settings are shown in Table~\ref{tab:idm_presets} and the remaining design parameters are shown in Table~\ref{tab:all_params}. During the simulation, the true SV behavior is generated by a \emph{conservative} IDM model with a \emph{cautious} gateway, whereas the EV nominal joint model assumes an \emph{aggressive} SV IDM model with a \emph{cooperative} gateway. The EV is initialized in the upper lane at
$(X_e(0),Y_e(0))=(20,\;Y_{\mathrm{upper}})$,
and the SV is initialized in the lower lane at
$(X_s(0),Y_s(0))=(X_e(0)-5.5,\;Y_{\mathrm{lower}})$.
The RU is placed in the upper lane at
$(X_{\mathrm{RU}},Y_{\mathrm{RU}})=(X_e(0)+6,\;Y_{\mathrm{upper}})$.
The simulation sampling time is $\Delta t=0.1~\mathrm{s}$.
Here, $Y_{\mathrm{upper}}$ and $Y_{\mathrm{lower}}$ denote the lateral positions of the centers of the upper and lower lanes, respectively.
Linear class-$\mathcal{K}$ functions are used for both the CLF and CBF constraints:
$\alpha_\mathrm{RU}(s)=\kappa_\mathrm{RU} s, \alpha_\mathrm{SV}(s)=\kappa_\mathrm{SV} s, \alpha_y(s)=\kappa_y s, \alpha_\psi(s)=\kappa_\psi s,$
where $\kappa_\mathrm{SV}>0$, $\kappa_\mathrm{RU}>0$, $\kappa_y>0$, and $\kappa_\psi>0$ are design parameters. 
Baseline, nominal and robust proposed controllers are compared in the simulation. For a fair comparison, the baseline controller also adopts the predictive CBF structure, including both the current and the predictive SV safety constraint together with the RU CBF constraint. The difference is that its predictive rollout is based on the baseline non-interactive model \eqref{eq:joint_sys_baseline}.

\begin{table}[t]
\centering
\refstepcounter{table}
\label{tab:idm_presets}
\textbf{Table \thetable: P-IDM presets}\\[2pt]
\setlength{\tabcolsep}{2pt}
\renewcommand{\arraystretch}{1}
\footnotesize
\begin{tabular}{lcc lcc}
\toprule
IDM preset & $a_{\max}$ [m/s$^2$] & $b$ [m/s$^2$] & Gateway preset & $N_{\mathrm{p}}$ & $c$ [m] \\
\midrule
Conservative & 2.0 & 3.0 & Cautious    & 10 & 1.0 \\
Normal       & 4.0 & 5.0 & Normal      & 20 & 2.0 \\
Aggressive   & 6.0 & 6.0 & Cooperative & 40 & 3.0 \\
\bottomrule
\end{tabular}
\vspace{0.3em}

\footnotesize All presets share $s_0=10$\,m, $T=1.5$\,s, $\delta=4$, and $v_0=10$\,m/s.
\end{table}

\begin{table}[t]
\centering
\refstepcounter{table}
\label{tab:all_params}
\textbf{Table \thetable: Simulation, controller, and robust-design parameters.}\\[2pt]
\setlength{\tabcolsep}{2pt}
\renewcommand{\arraystretch}{1.0}
\footnotesize
\begin{tabular}{cccccc}
\toprule
Parameter & Value & Parameter & Value & Parameter & Value \\
\midrule
$v_e(0)$                 & $10$\,m/s       & $p_y$                   & $25$           & $\lambda_s$       & $10$\,s$^{-1}$ \\
$v_s(0)$                 & $12.5$\,m/s     & $p_\psi$                & $15$           & $\delta_b$        & $0.5$\,m/s$^2$ \\
$r^\mathrm{RU}_{a}, r^\mathrm{RU}_{b}$      & $2.0$\,m        & $\kappa_y,\kappa_\psi$           & $1.5$          & $\delta_L$        & $0.5$\,m/s$^3$ \\
$r^\mathrm{SV}_{a}$      & $4.5$\,m        & $\kappa_{\mathrm{SV}}$  & $5$            & $\delta_{e,\min}$ & $-1.8$\,rad \\
$r^\mathrm{SV}_{b}$      & $2.5$\,m        & $N$                     & $20$           & $\delta_{e,\max}$ & $+1.8$\,rad \\
\bottomrule
\end{tabular}
\end{table}

Figures~\ref{fig:control_inputs} compare the baseline, nominal proposed, and robust proposed controllers. The baseline controller becomes infeasible immediately. As a result, both the steering and acceleration inputs remain zero, the EV cannot initiate an effective avoidance maneuver, and the vehicle continues toward the RU. This is further confirmed by the RU barrier value $H_{\mathrm{RU}}$ which drops below zero at around $0.4\,\mathrm{s}$, indicating loss of safety. The corresponding global trajectory in Figure~\ref{fig:global_trajectories} shows that the EV eventually collides with the RU obstacle.
In contrast, both the nominal proposed and robust proposed controllers remain feasible throughout the maneuver and generate feasible control actions from solving the QP, as shown in Figure~\ref{fig:control_inputs}.
The difference between the nominal proposed and robust proposed controllers is more clearly seen in the global trajectories in Figure~\ref{fig:global_trajectories}. For the nominal proposed controller, the internal joint model assumes a more cooperative SV response than the true one. Because of this model mismatch, although the EV completes the maneuver safely and the barrier values remain nonnegative, the true SV exhibits a small lateral deviation, visible in its trajectory. This deviation is generated by the built-in background-vehicle behavior in \texttt{highway-env}, where SVs follow IDM longitudinal dynamics together with a MOBIL-based lateral policy~\cite{highway-env}. By contrast, the robust proposed controller accounts for this mismatch through the multi-model predictive rollout and observer-based tightening, and therefore avoids this undesirable lateral displacement of the SV while still maintaining feasibility and safety.
Figure~\ref{fig:control_inputs} shows that the nominal proposed and robust proposed controllers keep the barrier values nonnegative throughout the maneuver, whereas the baseline controller violates the RU safety constraint. These results show that the proposed interaction-aware predictive CBF improves feasibility relative to the baseline model, and that the robust extension further improves the physical consistency of the interaction.

\begin{figure}[t]
  \centering
  \begin{minipage}[t]{0.48\columnwidth}
    \centering
    \includegraphics[width=\columnwidth]{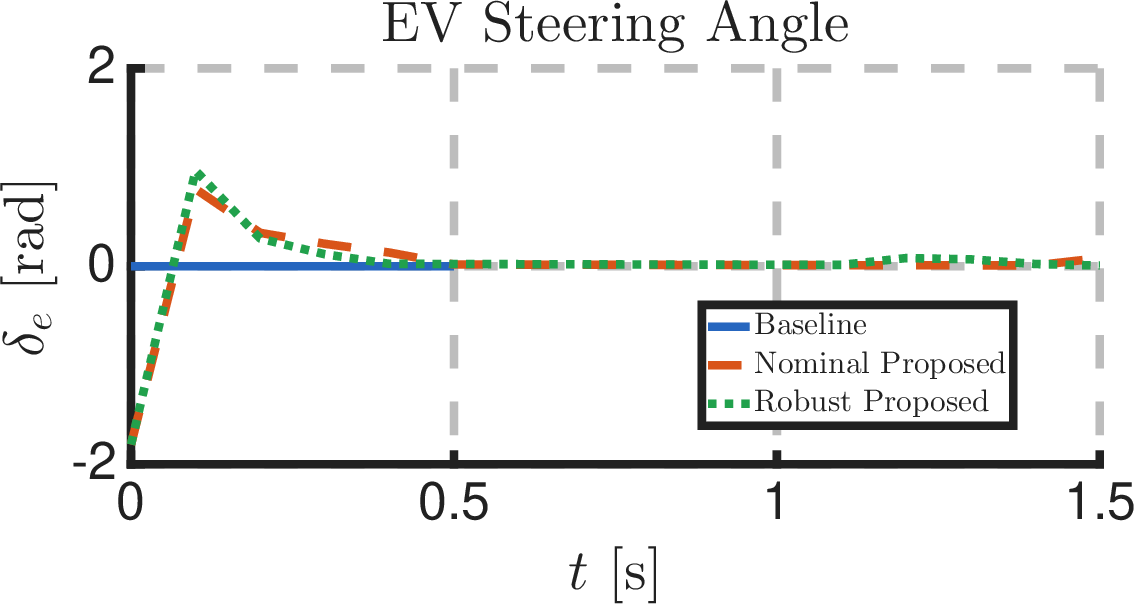}
  \end{minipage}
  \hfill
  \begin{minipage}[t]{0.48\columnwidth}
    \centering
    \includegraphics[width=\columnwidth]{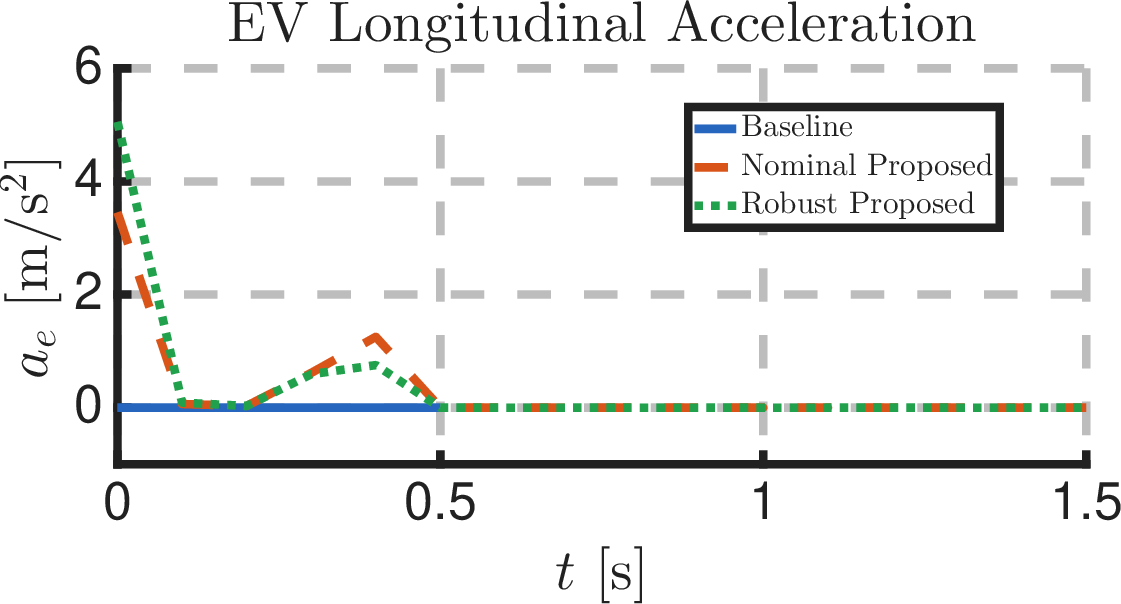}
  \end{minipage}
  \vspace{2pt}
  \includegraphics[width=0.48\columnwidth]{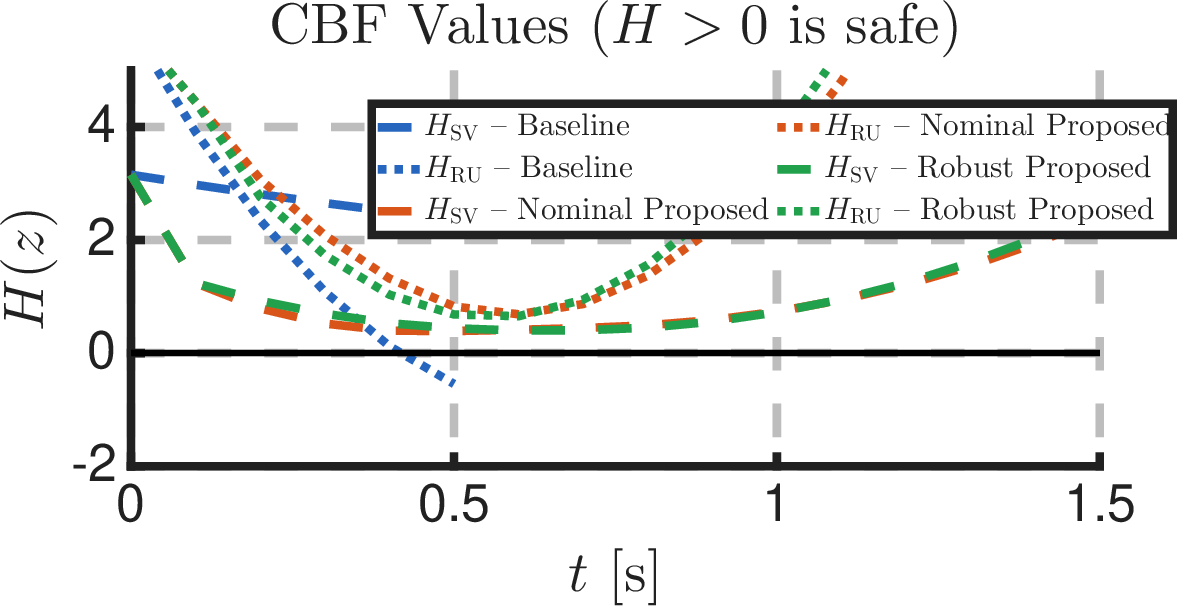}
  \caption{EV control inputs $\delta_e$ and $a_e$ and CBF values generated by the baseline, nominal proposed, and robust proposed controllers.}
  \label{fig:control_inputs}
\end{figure}

\begin{figure}[t]
  \centering
  \includegraphics[width=0.48\columnwidth]{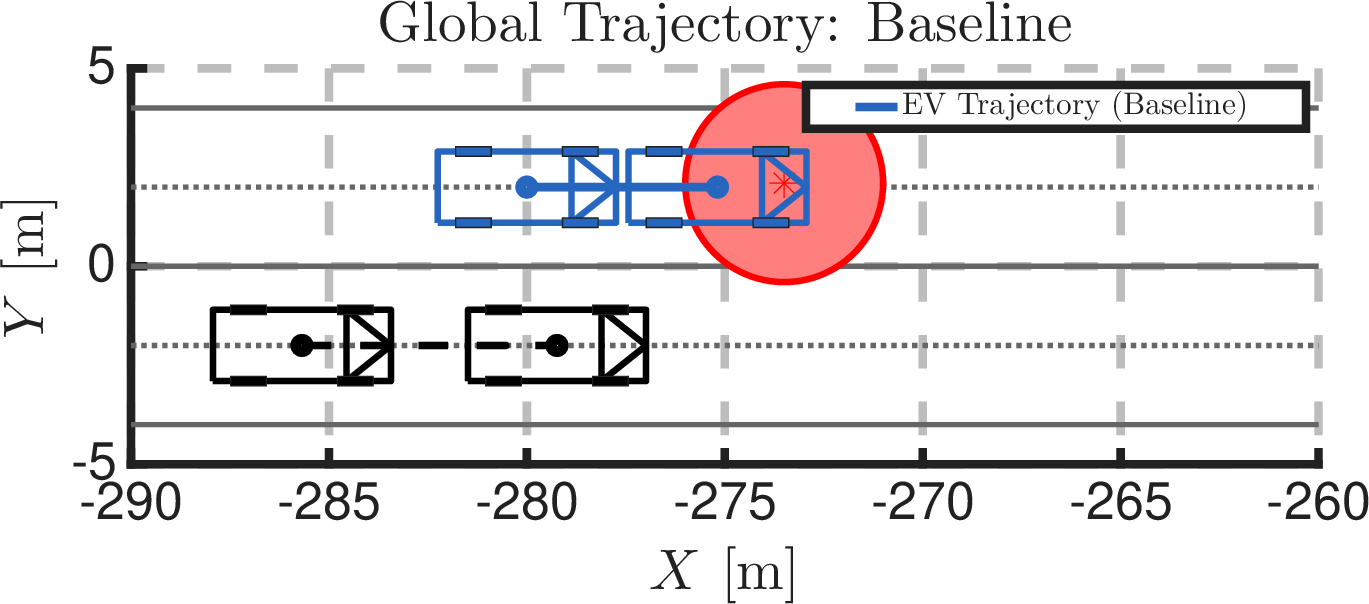}

  \begin{minipage}[t]{0.48\columnwidth}
    \centering
    \includegraphics[width=\linewidth]{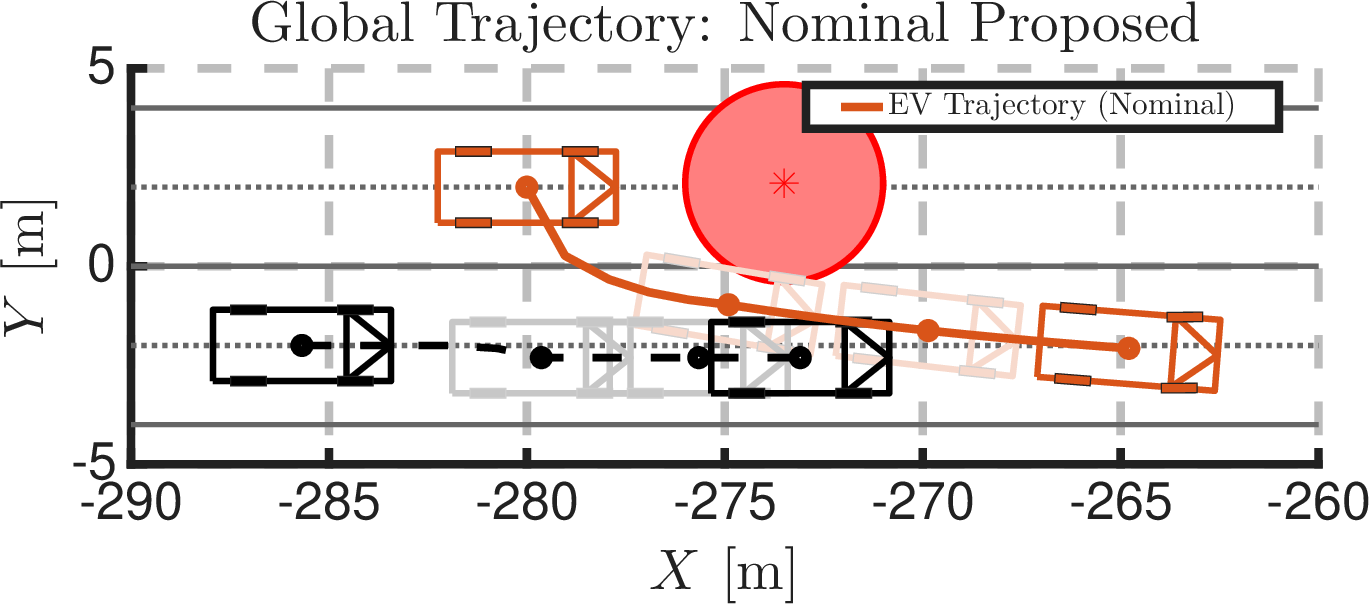}
  \end{minipage}
  \hfill
  \begin{minipage}[t]{0.48\columnwidth}
    \centering
    \includegraphics[width=\linewidth]{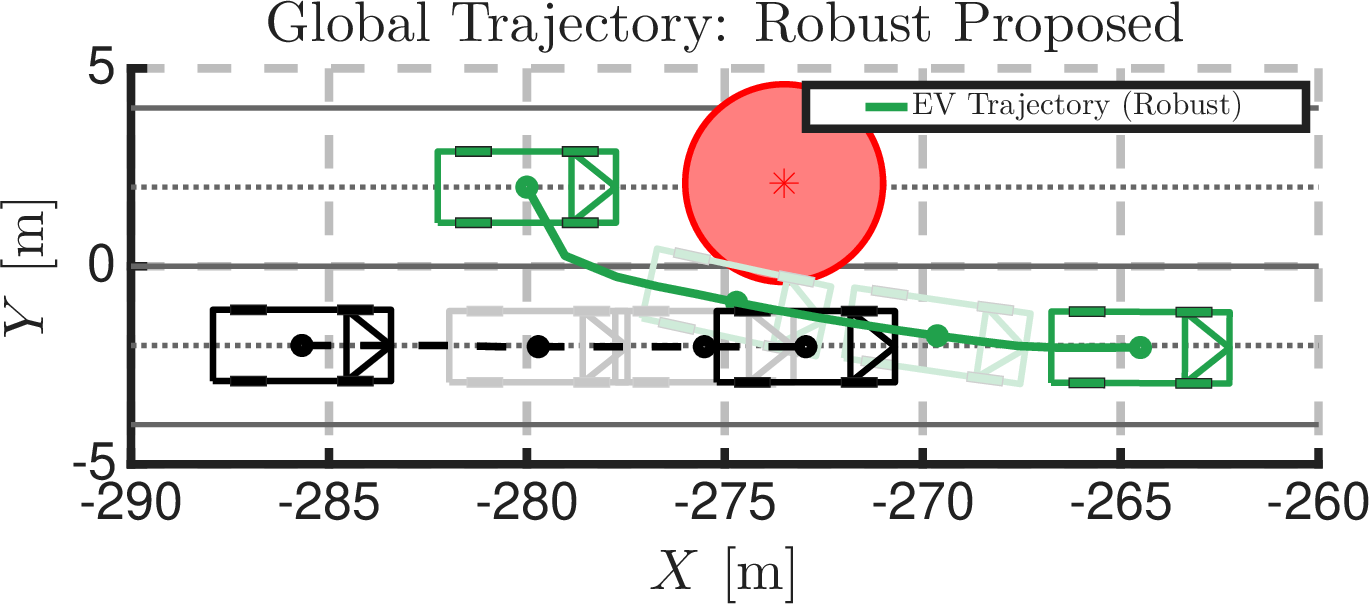}
  \end{minipage}
  \caption{Global trajectories for the baseline, nominal proposed, and robust proposed controllers. In all three plots, the SV is the black vehicle in the lower lane and the RU is the red circular obstacle.}
  \label{fig:global_trajectories}
\end{figure}

\section{Conclusions}
This paper proposed an IECBF for emergency lane change in mixed traffic,
incorporating an interaction-aware joint EV--SV model with P-IDM into
the ECBF framework. As demonstrated in simulation, compared with classical ECBF neglecting SV reactions,
the proposed method improves feasibility and safety. A robust extension combining multi-model rollout with
observer-based correction addresses SV-model uncertainty, further
improving feasibility under model mismatch.

\bibliographystyle{IEEEtran}

\bibliography{references}

\end{document}